\newtheorem{theorem}{Theorem}[section]
\newtheorem{lemma}[theorem]{Lemma}
\theoremstyle{definition}
\newcommand{\kibitz}[2]{\ifnum\Comments=1{\color{#1}{#2}}\fi}
\title{\bf Envy-Freeness in House Allocation Problems}
\author{Jiarui Gan}
\author{Warut Suksompong}
\author{Alexandros A. Voudouris}
\affil{Department of Computer Science, University of Oxford}
\begin{document}

\maketitle

\begin{abstract}
We consider the house allocation problem, where $m$ houses are to be assigned to $n$ agents so that each agent gets exactly one house. 
We present a polynomial-time algorithm that determines whether an envy-free assignment exists, and if so, computes one such assignment. We also show that an envy-free assignment exists with high probability if the number of houses exceeds the number of agents by a logarithmic factor.
\end{abstract}

\section{Introduction} \label{sec:intro}

In the \emph{house allocation problem}, also known as the \emph{assignment problem}, a set of $m$ houses are to be assigned to a set of $n$ agents with preferences over the houses, under the constraint that each agent is assigned exactly one house \citep{HyllandZe79,Zhou90,AbdulkadirogluSo98}.
Some economic efficiency condition is often desired, for example that the assignment is \emph{Pareto optimal}. This means that no other assignment makes some agent better off and no agent worse off in comparison to the current assignment \citep{AbrahamCeMa04,Manlove13}. 

In this note, we investigate the issue of fairness in house allocation using the well-established fairness notion of \emph{envy-freeness} \citep{Foley67,Varian74}. 
An allocation is said to be \emph{envy-free} if every agent likes her house at least as much as any other assigned house. 
Clearly, an envy-free allocation does not always exist, for example when all agents have the same strict ranking over the houses.
If the number of agents is equal to the number of houses, then all houses must be assigned.
In this case, it is easy to see that determining whether an envy-free assignment exists, and computing one if so, can be done in polynomial time. 
Indeed, we can simply construct a bipartite graph with the agents on one side and the houses on the other side, and add an edge between an agent and a house whenever the agent likes the house at least as much as any other house. 
An envy-free assignment exists if and only if the graph admits a perfect matching; it is well-known that the latter condition can be checked in polynomial time.

The purpose of our note is to study envy-freeness in the general house allocation problem where the number of houses can exceed the number of agents.
Formally, there are $m$ houses $M=\{1,2,\dots, m\}$ and $n$ agents $N = \{1,2,\dots,n\}$, where $m\geq n$. Each agent has a ranking over the houses, where ties are permitted. 
Allowing the number of agents and the number of houses to be different makes the problem more complex, and we can no longer determine the existence of envy-free assignments solely by matching agents to their favorite houses.
For example, if there are three houses and two agents with the rankings $1\succ 2\succ 3$ and $1\succ 3\succ 2$ over the houses, then even though both agents compete for the same top house, there is an envy-free assignment that assigns house 2 to agent 1 and house 3 to agent 2.
Nevertheless, we present a polynomial-time algorithm that determines whether an envy-free assignment exists, and computes one if it does. 
We then show that if the number of houses exceeds the number of agents by a logarithmic factor, an envy-free assignment exists with high probability.

To the best of our knowledge, the only work before ours to have considered envy-freeness in house allocation is that of \cite{BeynierChGo18}. 
Their work focuses exclusively on the $m=n$ case but contains the extra feature that agents are placed on a network that describes the envy relation, and they showed algorithms and hardness results for different networks.
Recently, \cite{Segalhalevi19} studied a concept called envy-free matchings on bipartite graphs, and provided conditions under which a non-empty envy-free matching exists along with algorithms to compute such matchings. In contrast to this note, his study is restricted to unweighted bipartite graphs, which correspond to each agent either approving or disapproving each house, and does not require every agent to be assigned to a house.

\section{Our Results}

Denote by $G=(X,Y,E)$ a bipartite graph with bipartite vertex sets $X,Y$ and edge set $E$. 
For any set of vertices $V$, denote by $S(V)$ the set of vertices that are adjacent to at least one vertex in $V$.
An \emph{X-saturating} matching is a matching that covers every vertex in $X$.
A set $Z\subseteq X$ is said to be a \emph{Hall violator} if $|Z|>|S(Z)|$. It is said to be a \emph{minimal Hall violator} if no $Z'\subset Z$ is a Hall violator.
Recall that by Hall's Theorem, an $X$-saturating matching exists if and only if $|Z|\leq |S(Z)|$ for all $Z\subseteq X$. 
In other words, there is an $X$-saturating matching exactly when no Hall violator is present.

As part of our algorithm, we will need to find a minimal Hall violator in the case where no $X$-saturating matching exists.
In particular, we show that if there is a Hall violator, it is possible to find a minimal one efficiently. 
Our approach is similar to that in Lemma~4.5 of \cite{AmanatidisMaNi17}.

\begin{lemma}
\label{lem:hall-violator}
Given a bipartite graph $G=(X,Y,E)$ without an $X$-saturating matching, a minimal Hall violator can be found in polynomial time.
\end{lemma}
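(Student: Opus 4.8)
The plan is to extract a minimal Hall violator directly from the structure of a maximum matching. First I would compute a maximum matching $M$ of $G$ in polynomial time (for instance via the Hopcroft--Karp algorithm). Since $G$ has no $X$-saturating matching, some vertex $u\in X$ is left unmatched by $M$. Let $L\subseteq X$ be the set of all $X$-vertices reachable from $u$ by $M$-alternating paths, i.e.\ paths starting with a non-matching edge and alternating between non-matching and matching edges; this set is computable by a single breadth-first search. I claim that $Z:=L$ is a minimal Hall violator.

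The first step is to show that $L$ is a Hall violator of deficiency exactly one, namely $|L|=|S(L)|+1$. By construction $S(L)$ equals the set $R$ of $Y$-vertices reachable from $u$. Because $M$ is maximum there is no augmenting path out of $u$, so every vertex of $R$ is matched, and its partner lies in $L$ since extending a reachable path along a matching edge stays inside $L$. Conversely, every vertex of $L\setminus\{u\}$ is entered along a matching edge and is thus matched to a vertex of $R$. Hence $M$ restricts to a bijection between $R$ and $L\setminus\{u\}$, giving $|S(L)|=|R|=|L|-1$, so $L$ is a Hall violator.

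The main work is proving minimality: no proper subset $Z'\subsetneq L$ is a Hall violator. If $u\notin Z'$, then $M$ matches every vertex of $Z'$ to a distinct vertex of $S(Z')$, so $|Z'|\le |S(Z')|$ and $Z'$ is not a violator. The delicate case is $u\in Z'$; write $Z'=\{u\}\cup Z''$ with $Z''\subsetneq L\setminus\{u\}$. Since $M$ injects $Z''$ into $S(Z')$, a violator would force $|S(Z')|=|Z''|$ and hence $S(Z')=M(Z'')$, meaning that every neighbour of $u$, and of every vertex of $Z''$, is the $M$-partner of some vertex of $Z''$. I would then argue that this makes $Z''\cup M(Z'')$ closed under alternating steps: any alternating walk out of $u$ immediately enters $M(Z'')$, is matched back into $Z''$, and can never leave. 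By induction this traps every vertex of $L\setminus\{u\}$ inside $Z''$, contradicting $Z''\subsetneq L\setminus\{u\}$. Therefore no proper sub-violator exists and $L$ is minimal.

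The hardest part is precisely this closure argument, since a naive ``delete one vertex at a time while the violator property persists'' procedure can get stuck at a non-minimal set; the alternating-reachability structure is what guarantees true minimality. Once the closure step is established, polynomial running time is immediate, as the whole procedure consists of one maximum-matching computation followed by a single graph search.
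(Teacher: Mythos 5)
Your proposal is correct and follows essentially the same route as the paper: compute a maximum matching, take the set of $X$-vertices reachable from an unmatched vertex by alternating paths (the paper encodes this as reachability in a directed auxiliary graph), show the deficiency is exactly one, and use the alternating-path structure to rule out proper sub-violators. Your minimality step is phrased as a closure/contradiction argument whereas the paper argues directly by locating an extra neighbour $w'$ along an alternating path, but the underlying structural fact is identical.
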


\begin{proof}
Let $B$ be a maximum matching of $G$, and let $X_m$ and $X_u$ be the set of vertices in $X$ that are matched and unmatched in $B$, respectively. 
Since $G$ does not admit an $X$-saturating matching, $|X_u|>0$.
Let $z$ be an arbitrary vertex in $X_u$.
Construct an auxiliary directed graph $G'$ with the same vertex set as $G$ as follows. For every edge $(x,y)\in E$ with $x\in X$ and $y\in Y$, add a directed edge from $x$ to $y$ in $G'$. In addition, for every edge $(x,y)\in B$ with $x\in X$ and $y\in Y$, add a directed edge from $y$ to $x$ in $G'$.
Let $Z$ be the set of vertices reachable from $z$ in $G'$. We claim that $Z$ is a minimal Hall violator. Note that $Z$ can be computed efficiently using depth-first search.

First, we show that $Z$ is a Hall violator, i.e., $|Z|>|S(Z)|$. Every vertex in $S(Z)$ is reachable from $z$ in $G'$. 
If a vertex $v\in S(Z)$ is unmatched in $B$, then by construction, a path from $z$ to $v$ alternates between edges in $B$ and edges not in $B$, starting and ending with edges not in $B$. 
Since $z$ and $v$ are not matched in $B$, this path is an augmenting path, contradicting the maximality of $B$. So every vertex in $S(Z)$ is matched in $B$, implying the existence of an injection from $S(Z)$ to $Z$. 
Since $z\in X_u$, this injection is not a surjection. It follows that $|Z|>|S(Z)|$. Observe also that every vertex in $Z$ besides $z$ is matched in $B$ by construction, so in fact we have $|Z|=|S(Z)|+1$.

Next, we show that there is no $Z'\subset Z$ such that $|Z'|>|S(Z')|$. If $z\not\in Z'$, then since all vertices in $Z'$ are matched in $B$, we have $|Z'|\leq |S(Z')|$. 
Assume now that $z\in Z'$. Let $v\in Z\backslash Z'$. As in the previous paragraph, there is a path from $z$ to $v$ that alternates between edges in $B$ and edges not in $B$. 
Let $w$ be the first vertex from $X$ in the path that is not in $Z'$, and let $w'$ be its match in $B$.
Since $w'$ can be reached directly from the vertex preceding it on the path, which belongs to $Z'$, we have $w'\in S(Z')$. 
This means that $S(Z')$ contains all vertices that are matched to $Z'\backslash\{z\}$ in $B$, along with $w'$. Hence $|S(Z')|\geq (|Z'|-1)+1 = |Z'|$.
\end{proof}

With the subroutine to compute a minimal Hall violator efficiently, we are now ready to present our main algorithm.
Recall that an envy-free assignment does not always exist; our algorithm decides whether such an assignment exists and also computes one in the case that it does.

\begin{algorithm}
\caption{Algorithm for Computing an Envy-Free Assignment}\label{alg:matching-basic}
\begin{algorithmic}[1]
\Procedure{EnvyFreeAssignment$(N, M, \text{rankings})$}{}
\State $M' \leftarrow M$
\While{$|N|\leq |M'|$}
\State Construct a bipartite graph $G=(N,M',E)$ where there is an edge from an agent to a house if and only if the house is among the most preferred houses in $M'$ for the agent. 
\If{there exists an $N$-saturating matching}
\State \Return the corresponding assignment
\Else
\State Find a minimal Hall violator $Z\subseteq N$.
\State Remove all houses adjacent to $Z$ in $G$ from $M'$.
\EndIf
\EndWhile
\State \Return null
\EndProcedure
\end{algorithmic}
\end{algorithm}

\begin{theorem}
\label{thm:envy-free-decision}
Algorithm~1 is a polynomial-time algorithm that decides whether an envy-free assignment exists and, if so, computes one such assignment.
\end{theorem}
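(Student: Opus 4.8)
The plan is to establish three properties of Algorithm~\ref{alg:matching-basic}: \emph{soundness} (any assignment it returns is envy-free), \emph{termination in polynomial time}, and \emph{completeness} (if an envy-free assignment exists, the algorithm returns one rather than \texttt{null}). Together these show that the algorithm correctly decides existence and produces a witness when one exists.

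Soundness and the running time I expect to be routine. Whenever the algorithm returns, it has found an $N$-saturating matching in the graph $G=(N,M',E)$ in which each agent is joined only to her most preferred houses within the current set $M'$; hence every agent receives a house that she weakly prefers to every house in $M'$, and since the assigned houses form a subset of $M'$, no agent envies another. For the running time, each iteration either returns or deletes the set $S(Z)$ of houses adjacent to a minimal Hall violator $Z$; this set is nonempty because, while the loop runs, $|M'|\ge |N|\ge 1$ so $M'\neq\emptyset$, every agent then has a top house in $M'$, and a minimal violator has size at least $2$ with $|S(Z)|=|Z|-1\ge 1$. Thus there are at most $m$ iterations, each using a maximum-matching computation and the subroutine of Lemma~\ref{lem:hall-violator}.

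The substance is completeness, which I would obtain from a \emph{safety invariant}: fix any envy-free assignment $\sigma$; then throughout the execution $\sigma(N)\subseteq M'$. Granting this, $|M'|\ge |\sigma(N)|=n=|N|$ at every iteration, so the loop never falls through to \texttt{null}; since the algorithm terminates, it must exit by returning an assignment, which is envy-free by soundness. The invariant holds initially as $M'=M$, and it is preserved across an iteration exactly when the deleted set satisfies $\sigma(N)\cap S(Z)=\emptyset$, where $Z$ is the minimal Hall violator removed.

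The crux, and the step I expect to be the main obstacle, is therefore this disjointness claim: if $Z$ is a \emph{minimal} Hall violator in $G$ and $\sigma$ is envy-free with $\sigma(N)\subseteq M'$, then no house of $S(Z)$ is used by $\sigma$. I would argue by contradiction, assuming $H:=\sigma(N)\cap S(Z)\neq\emptyset$, and split $Z$ by where $\sigma$ sends it. The key envy-free observation is that if some top-in-$M'$ house of an agent $a\in Z$ is assigned by $\sigma$, then $\sigma(a)$ is weakly preferred to that already-maximal house and lies in $M'$, so $\sigma(a)$ is itself a top-in-$M'$ house of $a$, giving $\sigma(a)\in S(Z)$. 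Set $W=\{a\in Z:\sigma(a)\in S(Z)\}$ and $Z_b=Z\setminus W$. For every $a\in Z_b$ the observation forces the top houses of $a$ to avoid $H$, so $S(Z_b)\subseteq S(Z)\setminus H$ and hence $|S(Z_b)|\le (|Z|-1)-|H|$; meanwhile $\sigma$ injects $W$ into $H$, so $|Z_b|=|Z|-|W|\ge |Z|-|H|$, yielding $|Z_b|>|S(Z_b)|$. Picking any $h\in H$ and an agent of $Z$ adjacent to it shows $W\neq\emptyset$, so $Z_b$ is a \emph{proper} subset of $Z$ that is itself a Hall violator, contradicting the minimality of $Z$. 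Thus $H=\emptyset$, which establishes the invariant and completes the proof. The reliance on minimality here is exactly why the algorithm invokes Lemma~\ref{lem:hall-violator} rather than deleting the neighborhood of an arbitrary Hall violator.
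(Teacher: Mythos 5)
Your proof is correct and follows essentially the same route as the paper's: the same soundness and running-time arguments, and the same completeness invariant that no envy-free assignment ever uses a removed house, established via minimality of the Hall violator together with the observation that an agent whose top-in-$M'$ house is assigned must itself receive a top-in-$M'$ house. The only cosmetic difference is in the final counting step: the paper derives a pigeonhole contradiction (too many agents of $Z$ forced into the assigned houses $Y' = H$), whereas you rearrange the same inequalities to exhibit $Z_b$ as a proper sub-violator of $Z$ and contradict minimality directly.
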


\begin{proof}
Finding a minimal Hall violator can be done in polynomial time using Lemma~\ref{lem:hall-violator}, so each iteration of the while loop can be implemented efficiently. 
Since every iteration either returns an envy-free assignment or reduces the size of $M'$ by at least $1$, Algorithm~1 runs in polynomial time.

If the algorithm returns an assignment, every agent receives one of their most preferred houses among the assigned houses, so the assignment is envy-free. 
We will show that when the algorithm removes houses from $M'$, these houses cannot be part of any envy-free assignment. 
This will imply that if the algorithm returns null, there is indeed no envy-free assignment.

We proceed by induction on the number of rounds. Consider an arbitrary iteration of the while loop in which at least one house is removed. 
By the induction hypothesis, all houses removed in previous iterations cannot be part of an envy-free assignment. 
Let $Z$ be the minimal Hall violator that the algorithm selects in the current iteration. Assume for contradiction that a subset of houses $\emptyset\neq Y'\subseteq S(Z)$ is part of an envy-free assignment. 
Let $X'$ be the set of agents in $Z$ who only have edges to houses in $S(Z)\backslash Y'$ in $G$. Note that since $Y\neq\emptyset$, we have $X'\neq Z$. 
If $X'$ is nonempty, then since $Z$ is a minimal Hall violator, $|X'|\leq |S(X')|\leq |S(Z)\backslash Y'|$. If $X'$ is empty, $|X'|\leq|S(Z)\backslash Y'|$ holds trivially.
Since $|Z|>|S(Z)|$, it follows that $|Z\backslash X'|>|Y'|$.

By definition of $X'$, every agent in $Z\backslash X'$ has at least one most preferred house in $Y'$; since the houses in $S(Z)\backslash Y'$ are unassigned, such an agent must be assigned to a house in $Y'$. 
However, there are fewer houses in $Y'$ than agents in $Z\backslash X'$, a contradiction.
\end{proof}

We illustrate how Algorithm~1 works with two examples:
\begin{itemize}
\item Assume that there are four houses and three agents such that the agents have rankings $1\succ 4\sim 3\succ 2$, $1\succ 4\succ 2\succ 3$, and $2\succ 1\succ 3\sim 4$ over the houses.
In the first iteration, there is an edge from agent~1 to house~1, from agent~2 to house~1, and from agent 3 to house~2.
There is no $N$-saturating matching, and agents 1 and 2 form a minimal Hall violator, so house 1 is removed.
In the second iteration, there is an edge from agent~1 to houses 3 and 4, from agent~2 to house~4, and from agent~3 to house~2.
There is an $N$-saturating matching, namely the matching that assigns agent~1 to house~3, agent~2 to house~4, and agent~3 to house~2, so this assignment is returned.

\item Assume that there are four houses and three agents such that the agents have rankings $1\succ 4\succ 3\succ 2$, $1\succ 4\succ 2\succ 3$, and $2\succ 1\succ 3\sim 4$ over the houses.
In the first iteration, there is an edge from agent~1 to house~1, from agent~2 to house~1, and from agent 3 to house~2.
There is no $N$-saturating matching, and agents 1 and 2 form a minimal Hall violator, so house 1 is removed.
In the second iteration, there is an edge from agent~1 to house~4, from agent~2 to house~4, and from agent~3 to house~2.
Again, there is no $N$-saturating matching, and agents 1 and 2 form a minimal Hall violator, so house 4 is removed.
The number of agents now exceeds the number of remaining houses, so the algorithm terminates without an envy-free assignment.
\end{itemize}

Note that an assignment returned by Algorithm~1 is Pareto optimal among all envy-free assignments. 
Indeed, every agent receives one of their most preferred houses in the current iteration of the while loop, and all houses removed in previous iterations cannot be used in any envy-free assignment. 
However, envy-freeness and Pareto optimality are incompatible in general. 
To see this, consider an example with three houses and two agents such that the agents have rankings $1\succ 2\succ 3$ and $1\succ 3\succ 2$ over the houses. 
The unique envy-free assignment is to assign house~2 to agent~1 and house~3 to agent~2. On the other hand, assigning house~1 to agent~1 instead yields a Pareto improvement.

We remark that even though our Algorithm~1 may appear similar to Algorithm~2 in the paper by \cite{Segalhalevi19} at first glance, there are two crucial differences. First, since the agents have ordinal preferences over the houses in our case, our algorithm needs to redefine the bipartite graph in each iteration in order to represent these preferences; in contrast, in the setting considered by Segal-Halevi, the agents only have binary valuations and hence the graph remains unchanged throughout the execution of his algorithm. Second, and more importantly, computing a \emph{minimal} Hall violator, instead of an arbitrary one as in Segal-Halevi's case, is necessary for identifying an envy-free assignment that allocates a house to \emph{every} agent as required in our setting. To see this, consider again the first example execution of our algorithm above. In the first iteration, besides the set consisting of agents 1 and 2, the set consisting of all three agents is also a Hall violator. Removing the preferred houses 1 and 2 of the agents in this non-minimal Hall violator cannot yield an envy-free assignment, since the number of remaining houses would then be smaller than the number of agents. On the other hand, as we have already seen, the example does admit an envy-free assignment.

Next, we consider a random preference model. We assume that the agents have strict preferences over the houses, and the preference of each agent is chosen uniformly at random among all strict rankings over the houses, independently of other agents. This is equivalent to assuming that agents have cardinal utilities over the houses drawn independently from an arbitrary non-atomic distribution.\footnote{A distribution is said to be \emph{non-atomic} if it does not put positive probability on any single point.} Under this model, it is not hard to see that the probability that an envy-free assignment exists is low in the case $m=n$; indeed, an envy-free assignment exists in this case only if all agents have distinct favorite houses, a highly unlikely event. However, we show that as soon as the number of houses exceeds the number of agents by a logarithmic factor, an envy-free allocation is likely to exist.

\begin{theorem}
Let $c$ be a constant strictly greater than the base of the natural logarithm $e$.
Suppose that the agents' preferences are drawn randomly as described above, and that $m\geq cn\log n$. Then the probability that an envy-free assignment exists converges to $1$ as $n\rightarrow\infty$.
\end{theorem}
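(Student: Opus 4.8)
The plan is to exhibit a simple combinatorial structure that forces an envy-free assignment, and then to show that this structure is present with high probability. Fix a parameter $L\approx\log n$, and for each agent $i$ let $T_i\subseteq M$ be the set of $i$'s top $L$ houses. Call a house \emph{private to $i$} if it lies in $T_i$ but in no other agent's top-$L$ set $T_j$ (for $j\neq i$). The key observation is: if every agent $i$ owns at least one private house $h_i$, then assigning $h_i$ to $i$ yields an envy-free assignment. Indeed, the $h_i$ are automatically distinct, since a house private to $i$ cannot lie in $T_j$ and hence cannot equal $h_j\in T_j$; so this is a valid assignment. Moreover, for any $j\neq i$, agent $j$ ranks $h_j$ within her top $L$ while $h_i\notin T_j$ ranks strictly below position $L$, so $j$ does not envy $i$. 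Thus $j$ weakly prefers her own house to every other assigned house, and the assignment is envy-free.

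It therefore suffices to bound the probability that some agent lacks a private house, which I would do by a union bound over the $n$ agents. The crux is to bound, for a fixed agent $i$, the probability that every house of $T_i$ also appears in some other agent's top-$L$ set. Since preferences are independent across agents, I would first reveal the sets $T_j$ for all $j\neq i$; writing $W=\bigcup_{j\neq i}T_j$, agent $i$ fails exactly when $T_i\subseteq W$. As $T_i$ is a uniformly random $L$-subset of $M$ independent of $W$, and $|W|\leq(n-1)L<nL$, a direct hypergeometric estimate gives
\[
\Pr[T_i\subseteq W]=\binom{|W|}{L}\Big/\binom{m}{L}\leq\left(\frac{|W|}{m}\right)^{L}\leq\left(\frac{nL}{m}\right)^{L}.
\]

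Finally I would substitute the parameters. Taking $L=\lceil\log n\rceil$ and using $m\geq 3n\log n$, the ratio $nL/m$ is essentially $1/3$, so a single agent fails with probability at most roughly $(1/3)^{\log n}$, and a union bound bounds the overall failure probability by about $n\,(1/3)^{\log n}$, which tends to $0$ (for the natural logarithm this equals $n^{1-\ln 3}$ with $\ln 3>1$). This is exactly why the constant $3$ together with the logarithmic surplus suffices. The main conceptual step—and the part that must be chosen correctly—is the sufficient condition itself: using single favorites ($L=1$) succeeds only when all favorites are distinct, which by the birthday paradox demands $m=\Omega(n^2)$, whereas enlarging each agent's pool to its top $\Theta(\log n)$ houses is precisely what trades the quadratic barrier for the logarithmic one. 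The remaining care lies in the covering estimate above and in checking that the exponent indeed comes out negative, which is where the slack in the constant $3$ is consumed.
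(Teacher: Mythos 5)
Your proof is correct, but it takes a genuinely different route from the paper's. The paper passes to cardinal utilities drawn uniformly from $[0,1]$ and declares a house ``won'' by agent $i$ if $i$ values it above $1-1/n$ while every other agent values it below that threshold; these per-house events are independent across houses, each occurs with probability $\frac{1}{n}\left(1-\frac{1}{n}\right)^{n-1}\geq\frac{1}{2.8n}$, so a fixed agent wins no house with probability at most $\left(1-\frac{1}{2.8n}\right)^m\leq n^{-1.05}$, and a union bound over agents finishes. Your argument stays entirely ordinal: you extract top-$\lceil\log n\rceil$ sets, define private houses, and bound the covering probability $\Pr[T_i\subseteq\bigcup_{j\neq i}T_j]$ by the hypergeometric estimate $(nL/m)^L\approx n^{-\ln 3}$. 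The two sufficient conditions are cousins---valuing a house above $1-1/n$ places it roughly among the agent's top $m/n\approx 3\log n$ houses---but the events analyzed are different (per-house ``unique admirer'' for the paper versus per-agent ``covering'' for you), and both proofs consume the slack in the constant $3$ in the same way, ending with an exponent barely below zero ($1-\ln 3\approx-0.099$ for you, $1-3/2.8\approx-0.07$ for the paper). What your approach buys is that it works directly with the ordinal model as stated, with a transparent combinatorial witness (a system of distinct private representatives); what the paper's buys is slightly simpler probability, since full independence across houses removes the need for your conditioning step and the bound $\binom{|W|}{L}/\binom{m}{L}\leq(|W|/m)^L$. Your reasoning is complete as sketched: the distinctness of the $h_i$, the envy-freeness check, the independence of $T_i$ from $W$, and the sign of the final exponent are all handled correctly.
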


\begin{proof}
Assume without loss of generality that each agent has a cardinal utility for each house, and this utility is drawn uniformly at random from the interval $[0,1]$, independently of other pairs of agents and houses. For each house, if some agent values it at least $1-1/n$ while the remaining agents value it at most $1-1/n$, we assign it to the former agent provided that the agent has not received a house. If all agents receive a house, the resulting assignment is envy-free since all agents value their own house at least $1-1/n$ and other assigned houses at most $1-1/n$. Hence it remains to show that the probability that all agents receive a house converges to $1$.

Let $d\in(e,c)$, and fix an agent. The probability that a particular house is assigned to the agent is $\frac{1}{n}\cdot\left(1-\frac{1}{n}\right)^{n-1}$. Since $\lim_{n\rightarrow\infty}\left(1-\frac{1}{n}\right)^{n-1}=\frac{1}{e}$, we have $\frac{1}{n}\cdot\left(1-\frac{1}{n}\right)^{n-1}\geq\frac{1}{dn}$ for large enough $n$. Hence the probability that the agent does not receive a house is at most 
\begin{align*}
\left(1-\frac{1}{dn}\right)^m 
\leq \left(1-\frac{1}{dn}\right)^{cn\log n} 
\leq e^{-\frac{cn\log n}{dn}} 
= n^{-\frac{c}{d}},
\end{align*}
where the second inequality follows from $1+x\leq e^x$, which holds for every real number $x$. By union bound, the probability that some agent does not receive a house is at most $n\cdot n^{-\frac{c}{d}} = n^{1-\frac{c}{d}}$, which approaches $0$ for large $n$, completing the proof.
\end{proof}

\subsection*{Acknowledgments}

This work has been supported by the European Research Council (ERC) under grant number 639945
(ACCORD).

\bibliographystyle{named}
\bibliography{house_allocation}


\end{document}